\newcommand{\eq}{\begin{equation}}
\newcommand{\en}{\end{equation}}
\newcommand{\eqa}{\begin{eqnarray}}
\newcommand{\ena}{\end{eqnarray}}
\newcommand{\calF}{\mathcal{F}}
\newcommand{\F}{\mathbb{F}}
\newcommand{\Z}{\mathbb{Z}}
\newcommand{\N}{\mathbb{N}}
\newcommand{\C}{\mathbb{C}}
\newcommand{\T}{\mathcal{T}}
\newcommand{\U}{\mathcal{U}}
\newcommand{\CO}{\mathcal{O}}
\numberwithin{equation}{section}
\newtheorem{corollary}[equation]{Corollary}
\newtheorem{lemma}[equation]{Lemma}
\theoremstyle{definition}
\begin{document}

\setlength{\unitlength}{1mm} \thispagestyle{empty}



 \begin{center}
 {\bf Quantum Fourier Transform Over Galois Rings
 \footnote{This project was initiated when Yong Zhang was visiting
 Pawel Wocjan at the School of Electrical Engineering and Computer
 Science in University of Central Florida. The author thanks the local
 host for his relevant discussions. } }

  \vspace{.3cm}

  {Yong Zhang \\[.5cm]
 Center for High Energy Physics, Peking University\\
 Beijing,  100871, P. R. China

 }

 \end{center}

 \vspace{0.2cm}

\begin{center}
\parbox{14cm}{
\centerline{\small  \bf Abstract}  \noindent\\

  Galois rings are regarded as ``building blocks" of a finite
  commutative ring with identity. There have been many papers
  on classical error correction codes over Galois rings published.
  As an important warm-up before exploring
  quantum algorithms and quantum error correction codes over
  Galois rings, we study the quantum Fourier transform (QFT) over
  Galois rings and prove it can be efficiently preformed on
  a quantum computer. The properties of the QFT over
  Galois rings lead to the quantum algorithm for
  hidden linear structures over Galois rings.

 }

\end{center}

\vspace{.2cm}

\begin{tabbing}
{\bf Key Words:} Galois Rings, Quantum  Fourier Transform,
 Quantum Algorithms,\\ Quantum Error Correction Codes
           \\[.2cm]

{\bf PACS numbers}:  03.67.Ac, 02.10.Hh, 03.67.Pp
\end{tabbing}

 \newpage

 \section{Introduction}

  Quantum Fourier transform (QFT) is a main tool in constructing
  some quantum algorithms, for example, Shor's algorithm \cite{shor94}.
  Readers are invited to refer to the Chapter 5 in the textbook by
  Nielsen and Chuang \cite{nc99} for the definition of the QFT and
  its applications. The QFT over finite fields was introduced in
  two papers \cite{bcw00, dh00}: the one by De Beaudrap, Cleve and
  Watrous and the other by Van Dam and Hallgren. The properties of
  the QFT over finite fields directly give rise to quantum algorithms
  for hidden linear/non-linear structures over finite fields
  \cite{bcw00, ddw07}.

  A ring contains more algebraic structures than a field: every field is
  a ring but not every ring is a field.  Galois rings \cite{macdonald74, bf02, wan03}
  are regarded as ``building blocks" of a ring. Quantum
  information and computation \cite{nc99} over Galois rings
  is to be a very meaningful research topic. The QFT over Galois rings
  possibly leads to interesting quantum algorithms, for examples,
  quantum algorithms for hidden linear/non-linear structures over
  Galois rings (or even a ring).
  Quantum error correction codes \cite{crss98} over Galois rings can
  be explored because there are many classical error
  correction codes over Galois rings, see \cite{bf02, wan03, hkcss02}
  for relevant references. Hence our research on the QFT over Galois rings is
  an important warm-up to study quantum algorithms and quantum error correction
  codes over Galois rings.

  The remainder of this paper is organized as follows. Section 2 collects basic
  facts on Galois rings used in the following sections. Section 3 defines
  the QFT over Galois rings and analyzes its main properties. The second proof
  for the lemma \ref{char} comprehensively exploits various properties of Galois
  rings. Section 4 proves that the QFT over Galois rings can be efficiently
  implemented on a quantum computer. The properties of the discriminant matrix
  over Galois rings are discussed from different points of view. Last section
  remarks the QFT over a ring.

\section{Preliminary on Galois rings}

A {\em ring} $A$ is a set equipped with addition and multiplication.
It is an abelian group with the unit $0$ under addition, denoted by
$(A, +)$. It is a semigroup with the unit $1$ under multiplication,
denoted by $(A, .)$, in which an invertible element is called a {\em
unit} but some elements may have no inverses. A nonzero element $a
\in A$ is called a {\em zero divisor} if there is another nonzero
element $b\in A$ satisfying $ab=0$.

Our notations on Galois rings are taken from the book \cite{wan03}
by Wan. $R\equiv\Z_{p^s}$ denotes the residue class ring of integers
$\Z$ modulo $p^s$ for a prime number $p$ and an integer $s\ge 1$,
i.e., \eq R \equiv \Z_{p^s}=\{0, 1, 2, \cdots, p^s-1 \} \,. \en
$R^\prime\equiv GR(p^s,p^{sm})$ denotes a {\em Galois ring} of
characteristic $p^s$ and cardinality $p^{sm}$, where $m$ is some
integer $m\ge 1$. The ring $R$ is a subring of $R^\prime$, and the
ring $R^\prime$ is an extension of the ring $R$.

For $m=1$, the Galois ring $R'$ corresponds to a residue class ring
$R^\prime|_{m=1}=\Z_{p^s}$, and for $s=1$, it corresponds to a
finite field $R^\prime|_{s=1}=\F_{p^m}$.

An arbitrary element $\alpha$ of the Galois ring $R^\prime$ can be
expressed in two ways. In the {\em additive formalism}, $\alpha$ is
uniquely expressed as \eq \label{additive} \alpha = \sum_{i=0}^{m-1}
a_i \xi^i \quad\mbox{with}\quad a_i \in R\,, \en where $\xi$ is a
root of a {\em monic basic primitive polynomial}, \eq
\label{polynomial} h(X) = h_0 + h_1 X + \cdots + h_{m-1} X^{m-1} +
X^m \in R[X] \en of degree $m$ over $R$.

In the {\em $p$-adic formalism}, $\alpha$ is uniquely expressed as
\eq \label{padic} \alpha=\sum_{i=0}^{s-1} t_i \, p^i
\quad\mbox{with}\quad t_i \in \T_{p^m} =\{0,1,\xi,\cdots,
\xi^{p^m-2} \}\,, \en where the set $\T_{p^m}$ is referred to as the
{\em Teichm{\"u}ller set}.

In the $p$-adic formalism (\ref{padic}), $\alpha$ is a unit if and
only if $t_0\neq 0$, and it is a zero divisor or $0$ if and only if
$t_0=0$.

\begin{lemma}
\label{zdu} Given an arbitrary zero divisor $\alpha$ of the Galois
ring $R^\prime$, it can be expressed as  $\alpha=p^j \alpha^\prime$,
$1\le j \le s-1$ where $\alpha^\prime$ is a unit of $R^\prime$.
\end{lemma}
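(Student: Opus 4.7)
The plan is to exploit the $p$-adic expansion (\ref{padic}) together with the characterization stated just before the lemma: an element of $R'$ is a unit if and only if its $0$-th Teichmüller coefficient is nonzero, and it is a zero divisor or zero otherwise.

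First I would write the given zero divisor $\alpha$ in its $p$-adic form $\alpha = \sum_{i=0}^{s-1} t_i p^i$ with $t_i \in \T_{p^m}$. Because $\alpha$ is a zero divisor it is in particular not a unit, so the cited characterization forces $t_0 = 0$. Since $\alpha \neq 0$, not all $t_i$ vanish, so I may let $j$ be the smallest index with $t_j \neq 0$; then $1 \le j \le s-1$.

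Next I would factor out the common power of $p$ to write
\eq
\alpha \;=\; p^{j}\,\alpha^\prime, \qquad
\alpha^\prime \;=\; \sum_{k=0}^{s-1-j} t_{j+k}\, p^{k},
\en
and interpret $\alpha'$ itself through its $p$-adic expansion by padding with zero coefficients in the positions $k = s-j,\ldots,s-1$. The constant Teichmüller coefficient of $\alpha^\prime$ is then $t_j \neq 0$, so again by the characterization $\alpha^\prime$ is a unit of $R^\prime$, which is what the lemma claims.

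The only subtle point I expect is making sure the factorization $\alpha = p^j \alpha'$ is valid inside the ring and that $\alpha'$ as defined is legitimately a $p$-adic expansion of length $s$ with Teichmüller digits; this is purely a matter of reindexing and appending zeros, not a real obstacle. No property of Galois rings beyond uniqueness of the $p$-adic representation and the unit criterion recalled from (\ref{padic}) is needed.
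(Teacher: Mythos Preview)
Your argument is correct and follows essentially the same route as the paper's own proof: both use the $p$-adic expansion (\ref{padic}), observe that $t_0=0$ forces the first nonzero Teichm{\"u}ller digit to occur at some index $j$ with $1\le j\le s-1$, and factor $\alpha=p^j\alpha'$ with $\alpha'=t_j+t_{j+1}p+\cdots+t_{s-1}p^{s-1-j}$ a unit because its leading digit $t_j$ is nonzero. Your version merely spells out the reindexing and zero-padding of $\alpha'$ that the paper leaves implicit.
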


\begin{proof}
In the $p$-adic formalism, the zero divisor $\alpha\in R^\prime$ has
a unique form $$
 \alpha=t_j p^j + \cdots + t_{s-1} p^{s-1} =p^j \alpha^\prime, \quad
 t_j \neq 0, \quad 1 \le j \le s-1
$$
where $\alpha^\prime=t_j+ \cdots + t_{s-1} p^{s-j-1}$ is a unit of
$R^\prime$.
\end{proof}

The {\em  Frobenius automorphism} of the Galois ring $R^\prime$ over
$R$ is a map $\phi$ uniquely defined by $\phi(\xi)=\xi^p$ and
$\phi(r)=r$ for all $r\in R$.   Define a composition of the
Frobenius automorphism $\phi$ as \eq \phi^{i+1} =\phi^i \circ \phi,
\quad \phi^0=1, \quad i=0, 1, 2, \ldots\,, m - 1\,. \en Observe that
$\phi^m=1$.  The Galois group ${\rm Gal}(R'/R)$ of the ring
extension $R'/R$ is the cyclic group $\langle\phi\rangle$ of order
$m$ generated by $\phi$.  The trace ${\rm Tr} \equiv {\rm
Tr}_{R'/R}$ of this ring extension is defined by \eq \label{trace}
Tr\, :\,  R^\prime \to R\,, \quad Tr(\alpha) = \sum_{\varphi\in {\rm
Gal}(R'/R)} \varphi(\alpha)\,. \en It satisfies the following
properties:
 \eqa && Tr(\alpha + \beta)  =Tr(\alpha) + Tr(\beta), \quad
 Tr(\phi(\alpha))=Tr(\alpha), \nonumber \\
&& Tr(a\alpha)= a Tr(\alpha), \quad Tr(a) = m \, a \mbox{  for all }
a \in R\,. \ena Hence the trace mapping $Tr$ is a surjective
homomorphism from the additive group $(R^\prime, +)$ to  the
additive group $(R, +)$.

\section{The QFT over the Galois ring $R^\prime$}

We introduce the QFT over the Galois ring $R^\prime$, study its main
properties, and present two types of proofs for the lemma
\ref{char}.

\subsection{Notations}

The Galois ring $R^\prime$ is a module over its subring $R$, namely
$R\times R^\prime \to R^\prime$, see the book \cite{shoup05} by
Shoup for our notations on module and matrix over a ring.

In the additive formalism (\ref{additive}) of the Galois ring
$R^\prime$, the set $\{\xi^i\}^{m-1}_{i=0}$ forms a basis of this
module on $R$, and its element $x \in R^\prime$ has a simpler
notation
 $x= \vec{x}^T \cdot  \vec{\xi}$ with the row vector $ \vec{x}^T \in R^{1\times m}$
 and the column vector $ \vec{\xi}\in {R^\prime\,}^{m\times 1}$,
 \eq
\vec{x}^T= (x_0, \cdots, x_{m-1}), \quad
\vec{\xi\,}^T=(\xi^0,\cdots, \xi^{m-1})
 \en
where $T$ denotes the matrix transpose.

The set $\{|x_i\rangle\}_{x_i\in R}$ of all Dirac kets  is an
orthonormal basis of the Hilbert space $\C^{p^s}$, and hence the set
of all $m$-tuple tensor products of Dirac kets $|x_i\rangle$,
$i=1,\cdots, m-1$, denoted by
 $$|x\rangle \equiv |x_0\rangle\otimes |x_1\rangle\otimes \cdots \otimes
  |x_{m-1}\rangle, \quad x\in R^\prime$$
gives rise to an orthonormal basis of the Hilbert space
$(\C^{p^s})^{\otimes m}$. The set $\{|x\rangle\}$ satisfies
$\sum_{x\in R^\prime} |x\rangle\langle x|=Id_{p^s}^{\otimes m}$
where $Id_{p^s}$ denotes the $p^{s}$-dimensional identity.

Introduce a character $\chi_\alpha$ for the finite abelian group
$(R^\prime,+)$ by \eq \label{add_char}
\chi_\alpha(u)=(\omega_{p^s})^{Tr(\alpha u)},\quad \omega_{p^s}=e^{i
\frac {2\pi} {p^s}},\quad u\in R^\prime \en with the multiplication
law given by $\chi_\alpha\circ \chi_\beta\equiv
\chi_{\alpha+\beta}$, $\alpha, \beta \in R^\prime$. It satisfies
 $$\chi_\alpha(u+v)=\chi_\alpha(u)\chi_\alpha(v),\quad  u, v \in R^\prime,$$
and so it is a group homomorphism from the additive group
$(R^{\prime}, +)$ to the multiplicative semigroup $(R^\prime,
\cdot)$.

Let $\calF_{R^\prime}$ denote the QFT over the Galois ring
$R^\prime$. We use the additive character $\chi_\alpha$
(\ref{add_char}) to define $\calF_{R^\prime}$ as
 \eq
 \label{dftgalois}
\calF_{R^\prime}\equiv\frac 1 {\sqrt{p^{sm}}}
  \sum_{\alpha,\,\,u \in R^\prime} \chi_\alpha(u)| \alpha \rangle \langle
  u|.
\en At $m=1$, $\calF_{R^\prime}$ is the QFT $\calF_R$ over
$R=\Z_{p^s}$ \cite{nc99}. At $s=1$, $\calF_{R^\prime}$ is the QFT
over the finite field $\F_{p^m}$ \cite{bcw00,dh00}.

\subsection{Properties of the QFT over $R^\prime$}

We describe the properties of the $\calF_{R^\prime}$
(\ref{dftgalois}) in three corollaries of the lemma (\ref{char}).

\begin{lemma}
\label{char} Let $R^\prime$ denote the Galois ring $GR(p^s,p^{sm})$
and $\chi_\alpha$ denote the additive character, $\alpha\in
R^\prime$. Then the character $\chi_\alpha(u)$ has the property \eq
 \sum_{u\in R^\prime} \chi_\alpha(u) =p^{sm} \delta_{\alpha,0}
 =\left\{\begin{array}{cc}
  p^{sm}, & \alpha=0 \\
  0, & \alpha \neq 0
 \end{array}\right..
\en
\end{lemma}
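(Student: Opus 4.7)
The plan is to use the standard translation argument for character sums over finite abelian groups, with the only non-routine ingredient being the verification that $\chi_\alpha$ is a nontrivial character whenever $\alpha\neq 0$. The case $\alpha=0$ is immediate since $\chi_0(u)=\omega_{p^s}^{Tr(0)}=1$ for every $u$, so the sum collapses to $|R'|=p^{sm}$.

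For $\alpha\neq 0$, write $S=\sum_{u\in R'}\chi_\alpha(u)$. Since $u\mapsto u+v$ is a bijection of $R'$ for any fixed $v\in R'$, and $\chi_\alpha(u+v)=\chi_\alpha(u)\chi_\alpha(v)$, we obtain $S=\chi_\alpha(v)\,S$, hence $(1-\chi_\alpha(v))S=0$. If we can exhibit $v\in R'$ with $\chi_\alpha(v)\neq 1$, then $S=0$ follows. So the entire proof reduces to showing that for every nonzero $\alpha\in R'$ there exists $v\in R'$ with $Tr(\alpha v)\not\equiv 0\pmod{p^s}$.

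This last step is the main obstacle, and I would split it according to whether $\alpha$ is a unit or a zero divisor. If $\alpha$ is a unit, then $v\mapsto \alpha v$ is a bijection on $R'$, and since the trace $Tr:R'\to R$ is surjective (recalled right after equation \eqref{trace}), there exists $v$ with $Tr(\alpha v)=1\not\equiv 0\pmod{p^s}$. If $\alpha$ is a zero divisor, I invoke Lemma \ref{zdu} to write $\alpha=p^j\alpha'$ with $\alpha'$ a unit and $1\le j\le s-1$. Then $Tr(\alpha v)=p^j\,Tr(\alpha' v)$, and applying surjectivity of the trace combined with the fact that $\alpha'$ acts as a bijection, we can pick $v$ so that $Tr(\alpha' v)=1$, giving $Tr(\alpha v)=p^j$. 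Because $j\le s-1$, this is nonzero in $\Z_{p^s}$, so $\chi_\alpha(v)=\omega_{p^s}^{p^j}=e^{2\pi i/p^{s-j}}\neq 1$, as required.

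The subtle point worth flagging is precisely the zero-divisor case: one must check that the factor $p^j$ does not wipe out nontriviality, and this is exactly where the bound $j\le s-1$ from Lemma \ref{zdu} is essential. Everything else is a bookkeeping application of the translation-invariance trick and basic properties of the trace listed after \eqref{trace}.
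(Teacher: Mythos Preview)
Your proof is correct and follows the paper's first proof almost verbatim: the translation trick $\sum_u\chi_\alpha(u)=\chi_\alpha(v)\sum_u\chi_\alpha(u)$ is exactly what the paper uses, and the paper likewise asserts the existence of $v$ with $\chi_\alpha(v)\neq 1$. The one place you go further is in actually \emph{proving} nontriviality of $\chi_\alpha$ for $\alpha\neq 0$, which the paper's first proof leaves as a bare assertion. Your unit/zero-divisor case split, invoking Lemma~\ref{zdu} and the surjectivity of $Tr$, is precisely the structure of the paper's \emph{second} proof (Section~3.3), though you deploy it for a lighter purpose: you only need one $v$ with $Tr(\alpha v)\not\equiv 0$, whereas the second proof bypasses the translation trick entirely and computes each sum directly via the coset decomposition $R'/\ker(Tr)\cong R$. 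So your argument can be read as the first proof with the missing step filled in using the machinery of the second proof; it is tighter than the second proof but more self-contained than the first.
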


\begin{proof}
It is obvious for $\alpha=0$.  For $\alpha\neq 0$, there exists an
element $v\in R^\prime$ satisfying $\chi_\alpha(v) \neq 1$, and we
have
$$\sum_{u\in R^\prime} \chi_\alpha(u)=\sum_ {u\in
R^\prime} \chi_\alpha(u+v)=\chi_\alpha(v)\sum_ {u\in R^\prime}
\chi_\alpha(u)\Longrightarrow \sum_ {u\in R^\prime}
\chi_\alpha(u)=0$$ since $\chi_\alpha$ (\ref{add_char}) is a
nontrivial character of the additive group $(R^\prime, +)$.
\end{proof}

\begin{corollary}
The set of all $p^{sm}$-dimensional normalized vectors,
 $$\vec{\chi}_{\alpha}=\frac 1 {\sqrt{p^{sm}}} (\chi_\alpha (u))_{u\in
 R^\prime},\quad  \alpha\in R^\prime$$ is an orthonormal basis of the
 Hilbert space $\C^{p^{sm}}$, and we have
$$\vec{\chi}_\alpha (\vec{\chi}_\beta)^\dag \equiv \frac 1 {p^{sm}}
\sum_{u\in R^\prime} \chi_\alpha(u) \chi_\beta^\ast(u)
   =\delta_{\alpha,\beta}, \quad \alpha, \beta \in R^\prime$$
where $\dag$ denotes the Hermitian conjugation.

\end{corollary}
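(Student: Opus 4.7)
The plan is to reduce the orthogonality relation directly to Lemma \ref{char} using the multiplicative structure of the additive character $\chi_\alpha$.

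First, I would unpack the inner product on the left-hand side:
\[
\vec{\chi}_\alpha(\vec{\chi}_\beta)^\dag = \frac{1}{p^{sm}} \sum_{u\in R^\prime} \chi_\alpha(u)\, \chi_\beta^\ast(u).
\]
Next, I would observe that $\chi_\beta(u) = \omega_{p^s}^{Tr(\beta u)}$ is a root of unity, so its complex conjugate equals its inverse, and by the $\Z$-linearity of $Tr$ together with the definition (\ref{add_char}), one has $\chi_\beta^\ast(u) = \chi_{-\beta}(u)$. Applying the multiplication law $\chi_\alpha \circ \chi_{-\beta} = \chi_{\alpha - \beta}$ stated just before the definition of $\calF_{R^\prime}$, the sum becomes $\sum_{u\in R^\prime} \chi_{\alpha-\beta}(u)$.

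Now Lemma \ref{char} applies verbatim with the character parameter $\alpha - \beta \in R^\prime$, giving $p^{sm}\delta_{\alpha-\beta,0} = p^{sm}\delta_{\alpha,\beta}$. Dividing by $p^{sm}$ yields the claimed $\delta_{\alpha,\beta}$. Finally, to upgrade orthonormality to a basis statement, I would use a counting argument: the family $\{\vec{\chi}_\alpha\}_{\alpha\in R^\prime}$ contains $|R^\prime| = p^{sm}$ vectors, which is exactly the dimension of $\C^{p^{sm}}$; any orthonormal family of the right cardinality spans the space, hence is an orthonormal basis.

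There is no real obstacle here: the entire content is encapsulated in Lemma \ref{char}, and the only subtlety worth being careful about is the conjugation step $\chi_\beta^\ast = \chi_{-\beta}$, which requires noting that $-\beta \in R^\prime$ makes sense because $(R^\prime,+)$ is an abelian group, so Lemma \ref{char} is applicable to $\alpha-\beta$ without any case distinction between units and zero divisors.
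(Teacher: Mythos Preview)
Your argument is correct and is exactly the intended one: the paper states this as an immediate corollary of Lemma~\ref{char} without writing out a proof, and the reduction you spell out (using $\chi_\beta^\ast=\chi_{-\beta}$, the group law $\chi_\alpha\chi_{-\beta}=\chi_{\alpha-\beta}$, then Lemma~\ref{char} with parameter $\alpha-\beta$, followed by the dimension count) is precisely what that omission presupposes.
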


Hence the $p^{sm}\times p^{sm}$ matrix $\frac 1 {\sqrt
{p^{sm}}}(\chi_\alpha(u))_{\alpha,u\in R^\prime}$ is a unitary
matrix, and  the QFT over $R^\prime$, $\calF_{R^\prime}$
(\ref{dftgalois}) is a unitary transformation, namely,
 $$\calF_{R^\prime} \calF^\dag_{R^\prime} =\calF^\dag_{R^\prime} \calF_{R^\prime}
 =Id_{p^s}^{\otimes m}$$
in the Hilbert space $(\C^{p^s})^{\otimes m}$.

\begin{corollary}
\label{shift} The shift operator $S_\alpha$ on the Galois ring
$R^\prime$, defined by
  $$S_\alpha \equiv \sum_{u\in R^\prime} |u+
{\,\alpha}\rangle\langle u|, \quad \alpha\in R^\prime,$$
 is diagonalized by the QFT over
$R^\prime$, $\calF_{R^\prime}$ (\ref{dftgalois}), namely,
 $$\calF_{R^\prime} S_{\alpha} \calF^\dag_{R^\prime}=\sum_{u\in R^\prime}
  \chi_\alpha(u) |u\rangle \langle u|.
 $$
\end{corollary}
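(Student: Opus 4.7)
The strategy is to verify the operator identity $\calF_{R^\prime} S_\alpha = D_\alpha \calF_{R^\prime}$, where $D_\alpha \equiv \sum_{u\in R^\prime} \chi_\alpha(u)|u\rangle\langle u|$ is the diagonal operator appearing on the right-hand side of the claim, and then right-multiply by $\calF_{R^\prime}^\dag$; unitarity of $\calF_{R^\prime}$, established in the preceding corollary, makes this last step legitimate. It suffices to test the identity on an arbitrary computational basis vector $|v\rangle$, $v\in R^\prime$.

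First I would evaluate the left-hand side: by definition $S_\alpha|v\rangle = |v+\alpha\rangle$, so applying (\ref{dftgalois}) yields $\calF_{R^\prime} S_\alpha |v\rangle = \frac{1}{\sqrt{p^{sm}}} \sum_\beta \chi_\beta(v+\alpha)|\beta\rangle$. The homomorphism property $\chi_\beta(v+\alpha) = \chi_\beta(v)\chi_\beta(\alpha)$ (displayed just before Lemma \ref{char}) factors this as $\frac{1}{\sqrt{p^{sm}}} \sum_\beta \chi_\beta(\alpha)\chi_\beta(v)|\beta\rangle$. For the right-hand side, $\calF_{R^\prime}|v\rangle = \frac{1}{\sqrt{p^{sm}}}\sum_\beta \chi_\beta(v)|\beta\rangle$, and $D_\alpha$ acts diagonally in the computational basis to produce $D_\alpha \calF_{R^\prime}|v\rangle = \frac{1}{\sqrt{p^{sm}}}\sum_\beta \chi_\alpha(\beta)\chi_\beta(v)|\beta\rangle$. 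The two expressions coincide provided $\chi_\beta(\alpha) = \chi_\alpha(\beta)$, i.e., $\omega_{p^s}^{\Tr(\beta\alpha)} = \omega_{p^s}^{\Tr(\alpha\beta)}$, which is immediate from the commutativity of multiplication in $R^\prime$.

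\emph{Main obstacle.} There is no deep obstacle here; the argument is a direct translation of the standard ``shift operator equals character multiplication under Fourier'' principle from the finite field setting. The only subtlety worth flagging is the symmetry $\chi_\alpha(\beta)=\chi_\beta(\alpha)$, which rests entirely on the commutativity of $R^\prime$ (and the $R$-linearity of $\Tr$ given in (\ref{trace})); it would fail in a noncommutative ambient ring. This is the reason the finite-field arguments of \cite{bcw00,dh00} carry over verbatim to $GR(p^s,p^{sm})$ without invoking any of the $p$-adic machinery from Section 2.
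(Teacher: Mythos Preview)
Your argument is correct, but it proceeds differently from the paper. The paper multiplies out the three operators directly, arriving at the triple sum
\[
\calF_{R^\prime} S_{\alpha} \calF^\dag_{R^\prime}=\frac{1}{p^{sm}} \sum_{u,v,t\in R^\prime} \chi_\alpha(u)\,\chi_{u-t}(v)\,|u\rangle\langle t|,
\]
and then invokes Lemma~\ref{char} to collapse $\sum_{v}\chi_{u-t}(v)=p^{sm}\delta_{u,t}$. You instead establish the intertwining relation $\calF_{R^\prime}S_\alpha=D_\alpha\calF_{R^\prime}$ on basis vectors, using only the homomorphism property of $\chi_\beta$ and the symmetry $\chi_\alpha(\beta)=\chi_\beta(\alpha)$, and then appeal to the unitarity of $\calF_{R^\prime}$ already established in the preceding corollary. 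Your route is a little cleaner: it avoids the triple sum and does not re-apply Lemma~\ref{char} explicitly (its content is hidden in the unitarity statement you cite). The paper's route, by contrast, keeps the dependence on the orthogonality relation visible at this step. Both rely on the commutativity-based symmetry $\chi_\alpha(\beta)=\chi_\beta(\alpha)$; the paper uses it tacitly in writing $\chi_\alpha(u)$ rather than $\chi_u(\alpha)$ in the displayed triple sum, whereas you make it explicit.
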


\begin{proof}
After some algebra, we have \eq \calF_{R^\prime} S_{\alpha}
\calF^\dag_{R^\prime}=\frac 1 {p^{sm}} \sum_{u,\,v,\,t\in R^\prime}
\chi_\alpha(u) \chi_{u-t}(v) |  {u}\rangle \langle   {t}| \en and
then prove the corollary with the lemma \ref{char}.
\end{proof}

\begin{corollary}
\label{control} Let $A_r$ and $B_r$ denote the control additive
gates:
 $A_r|x\rangle |y\rangle \equiv |x\rangle|y+ r x\rangle$ and
 $B_r|x\rangle |y\rangle \equiv|x+r y\rangle|y\rangle$, $x,y,r\in R^\prime$.
  Then they have the
 control/target inversion property given by
 $$(\calF^\dag_{R^\prime} \otimes \calF_{R^\prime} ) A_r
  (\calF_{R^\prime} \otimes \calF^\dag_{R^\prime}) =B_r$$
\end{corollary}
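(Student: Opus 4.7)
The plan is to reduce the statement to Corollary \ref{shift} by expressing $A_r$ as a controlled shift and then using the diagonalization of the shift operator twice: once forward (on the second register) and once backward (on the first register).

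First I would rewrite the controlled-addition gate as $A_r = \sum_{x\in R^\prime} |x\rangle\langle x| \otimes S_{rx}$, where $S_{rx}$ is the shift operator defined in Corollary \ref{shift}. Conjugating the second register by $\calF_{R^\prime}$ gives
$$ (Id_{p^s}^{\otimes m}\otimes \calF_{R^\prime})\, A_r\, (Id_{p^s}^{\otimes m}\otimes \calF_{R^\prime}^\dag) \;=\; \sum_{x\in R^\prime} |x\rangle\langle x| \otimes \calF_{R^\prime} S_{rx} \calF_{R^\prime}^\dag \;=\; \sum_{x,u\in R^\prime} \chi_{rx}(u)\, |x\rangle\langle x|\otimes |u\rangle\langle u|, $$
by Corollary \ref{shift}.

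Next I would exploit the symmetry $\chi_{rx}(u) = \omega_{p^s}^{Tr(rxu)} = \omega_{p^s}^{Tr(rux)} = \chi_{ru}(x)$, which holds because multiplication in $R^\prime$ is commutative and the trace depends only on the product $rxu$. Swapping the role of $x$ and $u$ in the exponent converts the diagonal operator on the first register into the form $\sum_{x} \chi_{ru}(x)|x\rangle\langle x|$, parameterized by $u$. Now I would conjugate the first register by $\calF_{R^\prime}^\dag$ and apply Corollary \ref{shift} in reverse, i.e., $S_{ru} = \calF_{R^\prime}^\dag \bigl(\sum_{x} \chi_{ru}(x)|x\rangle\langle x|\bigr) \calF_{R^\prime}$, to obtain
$$ (\calF_{R^\prime}^\dag\otimes \calF_{R^\prime})\, A_r\, (\calF_{R^\prime}\otimes \calF_{R^\prime}^\dag) \;=\; \sum_{u\in R^\prime} S_{ru}\otimes |u\rangle\langle u|. $$

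Finally I would verify by direct action on a basis state that the right-hand side is exactly $B_r$: applying $\sum_u S_{ru}\otimes|u\rangle\langle u|$ to $|x\rangle|y\rangle$ yields $S_{ry}|x\rangle\otimes|y\rangle = |x+ry\rangle|y\rangle$. I do not anticipate a genuine obstacle; the only point requiring care is the commutativity/linearity manipulation $Tr(rxu)=Tr(rux)$ that lets the character on the second register be reinterpreted as a character on the first, which is the conceptual content of the ``control/target inversion.''
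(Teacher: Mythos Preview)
Your argument is correct and is exactly the approach the paper intends: decompose $A_r$ as a controlled shift, apply Corollary~\ref{shift} on the second register, use the commutativity $Tr(rxu)=Tr(rux)$ to swap the roles of the registers, and apply Corollary~\ref{shift} in reverse on the first register to recover $B_r$. The paper's own proof merely cites the methodology of \cite{bcw00} and says it is an application of Corollary~\ref{shift}, which is precisely what you have written out in detail.
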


\begin{proof} We use the same methodology  \cite{bcw00}
of proving the control/target inversion property for control
additive gates over finite fields. The proof is an application of
the corollary \ref{shift}.
\end{proof}

De Beaudrap and coauthors \cite{bcw00} introduced control additive
gates over finite fields and realized that  the control/target
inversion property derives the quantum algorithm for hidden linear
structures over finite fields. Therefore the corollary \ref{control}
leads to the same quantum algorithm for hidden linear structures
over Galois rings, see \cite{bcw00} for this algorithm.

\subsection{The second proof for the lemma \ref{char}}

The proof for the lemma \ref{char} is based on the fact that
$\chi_\alpha$ (\ref{add_char}) is the character of the additive
group $(R^\prime, +)$. On the other hand, the additive character
$\chi_\alpha$ (\ref{add_char}) contains information on the
multiplicative semigroup $(R^\prime, .)$, and therefore the lemma
\ref{char} can be proved only with properties of the Galois ring
$R^\prime$.

Given an element $\alpha\in R^\prime$, it is either $0$ or $1$ or a
non-identity unit or a zero divisor. Hence we prove the lemma
\ref{char} in the following four steps.

Denote $\chi(\alpha u)\equiv \chi_{\alpha}(u),\quad u\in R^\prime$.

1). $\alpha=0$. We have $\chi(0)=0$ then $\sum_{u\in R^\prime} 1
=p^{sm}$ to prove the lemma.

2). $\alpha=1$. The trace mapping $Tr$ over the Galois ring
$R^\prime$ relative to $R$ is a surjective additive group
homomorphism from $(R^\prime,+)$ to $(R,+)$. Denote the kernel of
this homomorphism by
$$ker(Tr)=\{v\in R^\prime | Tr(v)=0 \}$$
and then the quotient group $R^\prime/ker(Tr)$ is isomorphic to
$R=\Z_{p^s}$. The isomorphism gives rise to a partition of the
Galois ring $R^\prime$ as a disjoint union of the kernel $ker(Tr)$
and cosets $(z_i + Ker(T))$ with $Tr(z_i)=i$, $i=1,\cdots, p^s-1$.
This partition  derives the cardinality of $ker(Tr)$ or $(z_i +
ker(Tr))$ as $p^{(m-1)s}$. Hence we have
 \eq
 \sum_{u\in R^\prime} \chi(u) =p^{(m-1)s} \sum_{i=0}^{p^s-1}
 (\omega_{p^s})^i=0.
 \en

3). As $\alpha$ is a unit, the mapping $u \mapsto v=\alpha u$ is
bijective due to the existence of $\alpha^{-1}$, and we have
 $$\sum_{u\in R^\prime} \chi(\alpha u) = \sum_{v\in R^\prime} \chi(v)=0$$
 which uses the statement in the step 2).

 4). As $\alpha$ is a zero divisor, with the lemma \ref{zdu}, it
 has the form of $\alpha=p^j \alpha^\prime$, $1\le j \le s-1$, where
 $\alpha^\prime$ is a unit of the Galois ring $R^\prime$.
 We have
  \eq
 \sum_{u\in R^\prime} \chi(\alpha u) =\sum_{u\in R^\prime}
 (\omega_{p^{s-j}})^{Tr(\alpha^\prime u)} =\sum_{v\in R^\prime}
 (\omega_{p^{s-j}})^{Tr(v)}=0
  \en
  which exploits the steps 2) and 3).

A {\em nice} additive character for the additive group of a ring has
properties of its multiplicative semigroup so that the related QFT
over this ring has an efficient implementation on a quantum
computer. The second proof for the lemma \ref{char} and Section 4
suggest the character $\chi_\alpha$ (\ref{add_char}) as an example
for the {\em nice} additive character.

\section{An efficient implementation of $\calF_{R^\prime}$}

We study the factorization of $\calF_{R^\prime}$ (\ref{dftgalois})
in terms of $\calF_R=\calF_{R^\prime}|_{m=1}$ and then prove that it
can be efficiently performed on a quantum computer. We collect basic
facts on the discriminant matrix $D$ (\ref{discriminant}) of the
Galois ring $R^\prime$.

 \subsection{Factorization of $\calF_{R^\prime}$}

An $m\times m$  matrix associated with the basis
$\{\xi^i\}_{i=0}^{m-1}$ of the module $R^\prime$ on $R$, \eq
 \label{discriminant}
 D =(D_{ij})_{0\le i,j \le m-1}, \quad D_{ij}= Tr\,(\xi^{i+j}),
 \en
is called {\em the discriminant matrix} over the Galois ring
$R^\prime$, and it is the Hankel matrix satisfying
$D_{ij}=D_{i+1,j-1}$. We express the trace of the product of two
elements $x,y\in R^\prime$ as
 \eq
 Tr(x\cdot y)=\vec{x}^T D\vec{y} = \vec{x}^{\,\prime\, T} \vec{y}, \quad
 x_i^\prime=(D\vec{x})_i=Tr(x\xi^i).
 \en
Namely, we decomposes the trace of $x\cdot y$ as a linear summation
of the products of two elements $x_i^\prime, y_i\in R$.

Let $\calF_R$ denote the QFT  over the residue class ring of
integers $R$ ($\Z_{p^s}$),
 \eq
 \label{dftgal}
 \calF_R  \equiv \frac 1 {\sqrt{p^s}}
  \sum_{x_i,y_i\in R } (\omega_{p^s} )^{ x_i\cdot y_i} |y_i \rangle \langle x_i|,
   \quad 1 \le i \le m-1,
 \en
and then we describe $\calF_{R^\prime}$ (\ref{dftgalois}) as the
composition of an $m$-fold tensor product of $\calF_R$ and a shift
operator $\U_D$,
 \eq
 \label{dft}
 \calF_{R^\prime} = (\calF_R)^{\otimes m} \circ \U_D,
 \quad \U_D \equiv\sum_{x\in R^\prime} |x^\prime\rangle\langle x|,
 \en
where $x^\prime =(D\vec{x})^T\cdot \vec{\xi}$. Obviously, the
properties of  $\U_D$ are determined by the discriminant matrix $D$.

 \subsection{The discriminant matrix $D$ is invertible}

The set $\{|x\rangle\}_{x\in R^\prime}$ is an orthonormal basis of
the Hilbert space $(\C^{p^s})^{\otimes m}$. As the discriminant
matrix $D$ is invertible, the map $\vec{x}\mapsto
\vec{x}^{\,\prime}=D \vec{x}$ is bijective, and the set
$\{|x^\prime\rangle\}_{x^\prime\in R^\prime}$ also forms an
orthonormal basis of $(\C^{p^s})^{\otimes m}$. Hence the shift
operator $\U_D$ is a unitary transformation. Furthermore, we can
derive $\U_D^\dag =\U_{D^{-1}}$.

 \begin{lemma}
 \label{disinvert}
 The discriminant matrix $D$ (\ref{discriminant}) is invertible.
 \end{lemma}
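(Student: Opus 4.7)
My approach is to exhibit $D$ as a Gram matrix of a Vandermonde-type matrix over $R'$, compute $\det(D)$ explicitly, and then verify that it is a unit of $R$ by reducing modulo $p$. I would first use that ${\rm Gal}(R'/R)=\langle\phi\rangle$ acts on powers of $\xi$ by $\phi^k(\xi^n)=\xi^{p^k n}$ (since $\phi$ is a ring homomorphism and $\phi(\xi)=\xi^p$), so that
\[
D_{ij} \;=\; Tr(\xi^{i+j}) \;=\; \sum_{k=0}^{m-1}(\xi^{p^k})^i (\xi^{p^k})^j .
\]
Defining the $m\times m$ matrix $V$ over $R'$ by $V_{ki}=(\xi^{p^k})^i$, this reads $D=V^T V$. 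Here $V$ is a Vandermonde matrix with nodes $\xi,\xi^p,\ldots,\xi^{p^{m-1}}$, so
\[
\det(D) \;=\; \det(V)^2 \;=\; \prod_{0\le k<k'\le m-1}\!\!\bigl(\xi^{p^{k'}}-\xi^{p^k}\bigr)^2 .
\]
Note that $\det(D)$ lies in $R$, even though the individual factors on the right live in $R'$.

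Next I would argue that each factor $\xi^{p^{k'}}-\xi^{p^k}$ is a unit of $R'$. Reduction modulo $p$ gives a surjection $R' \to R'/pR' \cong \F_{p^m}$ sending $\xi$ to a root $\bar\xi$ of the reduction $\bar h(X)\in\F_p[X]$ of the monic basic primitive polynomial $h(X)$. By the definition of ``basic primitive'', $\bar\xi$ is a primitive element of $\F_{p^m}$, so its Frobenius conjugates $\bar\xi,\bar\xi^p,\ldots,\bar\xi^{p^{m-1}}$ are pairwise distinct. Hence $\xi^{p^{k'}}-\xi^{p^k}$ is nonzero modulo $p$ for $k\neq k'$, and is therefore a unit of $R'$ by the $p$-adic unit criterion recalled after (\ref{padic}). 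Thus $\det(D)\in R$ is a unit of $R'$, and since an element of $R$ is a unit iff it is nonzero modulo $p$ iff it is a unit in $R'$, we conclude that $\det(D)$ is a unit of $R$, so $D$ is invertible.

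The main subtlety I anticipate is the Vandermonde identification: $\phi$ is defined by $\phi(\xi)=\xi^p$ but does \emph{not} act as $\alpha\mapsto\alpha^p$ on all of $R'$, so I must apply the identity $\phi^k(\xi^n)=\xi^{p^k n}$ only to powers of $\xi$, where it follows purely from $\phi$ being a ring homomorphism. Once that Vandermonde factorization is in hand, the rest reduces to the classical separability of $\F_{p^m}/\F_p$, made concrete through the distinctness of the Frobenius conjugates of $\bar\xi$, together with the clean fact that units of $R'$ lying in $R$ are automatically units of $R$.
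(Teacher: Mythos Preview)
Your proof is correct and takes a genuinely different route from the paper's. The paper argues by contradiction on the kernel of the bilinear form $(x,y)\mapsto Tr(xy)$: assuming a nonzero $\beta$ with $Tr(\beta\alpha)=0$ for all $\alpha$, it splits into the cases $\beta$ a unit and $\beta$ a zero divisor (via Lemma~\ref{zdu}) and in each case forces the trace $Tr:R'\to R$ to miss all units of $R$, contradicting its surjectivity. Your argument instead computes $\det(D)$ explicitly: the identity $D=V^{T}V$ with $V$ the Vandermonde matrix on the Galois conjugates $\xi,\xi^{p},\ldots,\xi^{p^{m-1}}$ gives $\det(D)=\prod_{k<k'}(\xi^{p^{k'}}-\xi^{p^{k}})^{2}$, and reducing modulo $p$ to $\F_{p^{m}}$ shows each factor is a unit of $R'$, hence $\det(D)\in R$ is a unit of $R$. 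The paper's approach is basis-agnostic and would apply verbatim to the discriminant matrix of any $R$-basis of $R'$, using only the surjectivity of $Tr$; your approach is specific to the power basis $\{\xi^{i}\}$ but yields an explicit formula for $\det(D)$ and cleanly reduces the Galois-ring statement to the classical fact that the conjugates of a primitive element of $\F_{p^{m}}$ are distinct. Your handling of the subtlety that $\phi$ is not the $p$-th power map on all of $R'$, but does satisfy $\phi^{k}(\xi^{n})=\xi^{p^{k}n}$ by multiplicativity, is exactly right and is the one place the argument could have gone wrong.
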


 \begin{proof}
The discriminant matrix $D$ is invertible if and only if its rows
form a basis of $R^{1\times m}$, or equivalently, the following
equations
 \eq
 \sum_{i=0}^{m-1} b_i\,\cdot \, row_i(D) =0 \Leftrightarrow \sum_{i=0}^{m-1}
 b_i D_{ij} =0, \quad j=0, \cdots, m-1
 \en
 admit only $b_i =0$ as a solution.

 Assume a nonzero  solution $  \vec{b\,}^T=(b_0,\cdots,
 b_{m-1})$ of the equation $  \vec{b\,}^T \cdot D=0$ and a
 corresponding nonzero element $\beta=  \vec{b\,}^T \cdot
 \vec{\xi} \in R^\prime$. With another arbitrary
 element $\alpha= \vec{a}^T \cdot
   \vec{\xi} \in R^\prime$,
   we calculate
  \eq
  Tr(\beta \cdot \alpha) = \vec{b\,}^T \cdot D \cdot   \vec{a}=0
  \en
where $\vec{a}, \vec{b} \in R^{m\times 1}$. 1). As $\beta$ is a unit
of $R^\prime$, i.e., its inverse $\beta^{-1}$ exists, replacing
$\alpha$ with $\beta^{-1}\alpha$ gives rise to $Tr(\alpha)=0$ for
$\alpha\in R^\prime$. 2). As $\beta$ is a zero divisor of
$R^\prime$, with the lemma \ref{zdu}, it has a form of
 $\beta =p^k \beta^\prime$, $1\le k \le s-1$
where $\beta^\prime$ is a unit with the inverse
$(\beta^{\prime})^{-1}$, and then we have \eq
 p^k\, Tr (\beta^\prime \alpha) =0 \Rightarrow p^k\, Tr (\alpha) =0
 \en
suggesting that $Tr(\alpha)$ is either zero or a zero divisor of
$R$.

Hence, if $\beta\neq 0$, then $Tr(\alpha)$ for $\alpha\in R^\prime$
is a zero divisor or zero. This contradicts with the fact that the
trace map $Tr: R^\prime \to R$ is surjective. Therefore, $\beta=0$,
namely the equation $\vec{b\,}^T\cdot D=0$ only has a zero solution
$\vec{b}=0$,  which is equivalent to the existence of $D^{-1}$, the
inverse of the $D$ matrix.
 \end{proof}

This proof suggests: if $\{\xi^i\}^{m-1}_{i=0}$ is a basis of
$R^\prime$ then $D^{-1}$ exists. On the other hand, it is easy to
prove: if $D^{-1}$ exists then $\{\xi^i\}^{m-1}_{i=0}$ forms a basis
of $R^\prime$. The matrix $D$ is hence called the discriminant
matrix associated with the basis $\{\xi^i\}^{m-1}_{i=0}$ of the
Galois ring $R^\prime$.

\subsection{Remarks on the discriminant matrix $D$}

The lemma \ref{disinvert}, the existence of $D^{-1}$ over  the
Galois ring $R^\prime$, can be proved in the other way. If $D^{-1}$
exists, then the map $D: \vec{x} \mapsto D \vec{x}$ is bijective.
This means the kernel of this map $D$ is trivial, namely $D
\vec{x}=0$ if and only if $\vec{x}=0$. Assume a nonzero $\vec{y}$
satisfying  $D \vec{y}=0$. 1). As $y$ is a unit of $R^\prime$, the
set $\{y \xi^i \}_{i=0}^{m-1}$ is a new basis of the Galois ring
$R^\prime$. We expand $\alpha\in R^\prime$ with the new basis,
$\alpha =\sum_{i=0}^{m-1} a_i (y\xi^i)$, and then apply the trace
map to get $Tr(\alpha)=0$ for $\alpha\in R^\prime$ due to
$\vec{y\,}^T D=0$. 2). As $y$ is a zero divisor of $R^\prime$, with
the lemma \ref{zdu}, we denote $y=p^k y^\prime$, $1\le k \le s-1$
with $y^\prime$ a unit. We have  $p^k Tr(y^\prime \xi^i)=0$ due to
$Tr(y \xi^i)=0$. Expand $\alpha\in R^\prime$ with the new basis
$\{y^\prime \xi^i\}^{m-1}_{i=0}$, namely $\alpha=\sum_{i=0}^{m-1}
a_i^\prime (y^\prime \xi^i)$, and we have $p^k Tr (\alpha)=0$ which
suggests $Tr(\alpha)$ either a zero divisor or zero. Since the trace
map $Tr$ is surjective, the kernel of this map $D$ has to be
trivial, and hence $D$ is invertible.

Here, we make a sketch on how to compute the discriminant matrix $D$
(\ref{discriminant}) over the Galois ring $R^\prime$. Given a basic
primitive polynomial $\xi^m =\vec{h\,}^T\cdot \vec{\xi} $ from
(\ref{polynomial}) with roots $\xi$, $\xi^p$, $\cdots$,
$\xi^{p^{m-1}}$. 1). Compute $\xi^k$ in a recursive procedure, \eq
 \label{recursive}
 \xi^k = (\vec{h}^{(k-m)})^T\cdot \vec{\xi}, \quad \vec{h}^{(0)}=\vec{h},
 \quad
m \le k \le p^m-2 \en where $\vec{h}^{(k-m)}$ is calculated via
 \eq
\vec{h}^{(k-m)} = V^{k-m+1}
 \left(\begin{array}{c}
 0  \\
 0 \\
\vdots  \\
0 \\
1
\end{array}\right), \quad
V=\left(  \begin{array}{ccccc}
 0 & 1 & \cdots & 0 & h_0 \\
1 & 0 & \cdots & 0 & h_1 \\
0 & 1 & \cdots & 0 & h_2 \\
\vdots & \vdots & \ddots & \vdots & \vdots \\
0 & 0 & \cdots & 1 & h_{m-1}
\end{array}\right).
 \en
2). Compute $Tr(\xi^i)$, $1\le i \le m-1$ in terms of $\xi^j$, $1\le
j \le p^m-2$, with the definition of the trace (\ref{trace}). 3).
Compute $Tr(\xi^i)$, $m\le i \le 2 m-2$ in terms of $Tr(\xi^j)$,
$1\le j \le m-1$, with the help of the formula (\ref{recursive}).
4). We obtain all entries of the discriminant matrix $D$
(\ref{discriminant}).

\subsection{Complexity analysis of implementing $\calF_{R^\prime}$}

Denote $n=\log p^s$. Assume that the discriminator matrix $D$
(\ref{discriminant}) is known via relevant classical computation.

The factorization formalism (\ref{dft}) of the QFT
$\calF_{R^\prime}$ describes an efficient quantum circuit for the
implementation of $\calF_{R^\prime}$. It is known that the QFT
$\calF_R$ (\ref{dftgal}) can be efficiently approximated
\cite{hh00}. The invertible discriminant matrix $D$ gives rise to
the bijective map,
 $D: R^{\otimes m} \to R^{\otimes m}$. This map
can be efficiently performed as a permutation on a classical
computer, and hence the corresponding unitary transformation,
 $$\U_D: (\C^{p^s})^{\otimes m} \to (\C^{p^s})^{\otimes m}$$
 can be efficiently performed on a quantum computer \cite{ft82}.

The bijective map from $\vec{x}$ to $D\vec{x}$ ensures that the
vector $\vec{x}$ can be computed in a polynomial time with the known
$D$ and $D\vec{x}$.  The vector $D \vec{x}$ can be computed in time
$\CO(m^2)$. Hales and Hallgren \cite{hh00} proved that there exists
a quantum algorithm to approximate the QFT $\calF_R$ over
$R=\Z_{p^s}$ within accuracy $\epsilon$ which runs in time $\CO(n
\log \frac n \epsilon + \log^2 \frac 1 \epsilon)$. Hence
$\calF_{R^\prime}$ can be performed in a polynomial time $\CO(m^2) +
m \CO(n \log \frac n \epsilon + \log^2 \frac 1 \epsilon)$ within
accuracy $\epsilon$.

Let $C(p^s,\epsilon)$ denote the minimum size of a quantum circuit
approximating the QFT $\calF_R$ over $R$ within accuracy $\epsilon$,
and then performing $\calF_R^{\otimes m}$ needs a quantum circuit
with the size $m C(p^s,\epsilon)$. The matrix operation $D \vec{x}$
can be performed in a circuit with size $\CO(m^2 n^2)$, namely, each
arithmetic operation needs a circuit with size $n^2$. Hence
$\calF_{R^\prime}$ is performed on a quantum circuit with the size
$\CO(m^2 n^2) + m C(p^s,\epsilon)$.

Therefore, {\em the QFT  $\calF_{R^\prime}$ (\ref{dft}) over the
Galois ring $R^\prime$ can be performed within accuracy $\epsilon$
in a polynomial time
$$\CO(m^2)+m \CO \left( n \log \frac n \epsilon + \log^2 \frac 1
\epsilon\right )$$ and by a quantum circuit of the size $\CO(m^2
n^2) + m C(p^s,\epsilon)$.}

 \section{Comments on the QFT over a ring}

With the help of the QFT over Galois rings, the QFT over a finite
commutative ring with identity can be defined in principle.

A finite commutative ring with identity is expressed as a direct sum
of local rings, and a local commutative ring can be characterized as
a homomorphic image of a polynomial ring over a Galois ring, see
\cite{macdonald74, bf02} for related theorems and proofs. The
simplest example is the fundamental theorem of arithmetics: given a
unique prime factorization of the integer $m$ by
 $$m=p_1^{n_1} p_2^{n_2}\cdots p_k^{n_k}, \quad n_i \in \N, p_i \,\,
  \textrm{prime},
 \quad 1 \le i \le k,$$
there is a ring isomorphism
 $$\Z_m \cong \Z_{p_1^{n_1}} \oplus \Z_{p_2^{n_2}}\oplus\cdots
  \oplus \Z_{p_k^{n_k}} $$
which defines the QFT over $\Z_m$ in terms of the QFTs over
$\Z_{p^i}$.

De Beaudrap and coauthors \cite{bcw00} proved that if the QFT over a
ring has the property of the control/target inversion then the QFT
over the matrix ring has the same property. A matrix ring over a
finite commutative ring is often a noncommutative ring, and hence
the QFT over a noncommutative ring can be discussed via the QFT over
a finite commutative ring with identity.

 \section*{Acknowledgements}

The author thanks Hamed Ahmadi, Joseph Brennan, Daniel Nagaj, and
Martin R{\"o}tteler for relevant comments. Y. Zhang is in part
supported by NSF-China Grant-10605035.

 \end{document}